\newcommand{\maxmult}{\ensuremath{\mathit{maximal\_multiplicity}}\xspace}
\newcommand{\observe}{\ensuremath{\mathit{observe\_neighbors}}\xspace}
\newcommand{\E}{\mathbb E}
\newcommand{\PP}{\mathbb P}
\newcommand{\NN}{\mathbb N}
\newcommand{\ZZ}{\mathbb Z}
\newcommand{\qedsymb}{\hfill{\rule{2mm}{2mm}}}
\newenvironment{proof}
{\begin{trivlist}
\item[\hspace{\labelsep}{\bf\noindent Proof: }]
}
{\qedsymb\end{trivlist}}
\newtheorem{lemma}{Lemma}
\newtheorem{definition}{Definition}
\newtheorem{remark}{Remark}
\newtheorem{note}{Note}
\newtheorem{texample}{Toy example}
\newcommand{\remove}[1]{}
\begin{document}

\title{The Cost of Probabilistic Agreement in Oblivious Robot Networks}

\author{Julien Clement$^{\star\star}$\\
{\footnotesize clement@lri.fr}
\and Xavier Defago$^{\circ}$\\
{\footnotesize defago@jaist.ac.jp}
\and Maria Gradinariu Potop-Butucaru$^\star$\\
{\footnotesize maria.gradinariu@lip6.fr}
\and Stephane Messika$^{\star\star}$\\
{\footnotesize messika@lri.fr}\\
{\footnotesize
\begin{tabular}{c c c}
$^\star$ Universit\'{e} Pierre et Marie Curie - Paris 6&
$^{\star\star}$ Universit\'{e} Paris-Sud - Paris 11 & $^\circ$ School of
Information Science\\
LIP6/CNRS UMR 7606 &
LRI/CNRS UMR 8623 & JAIST \\
104 avenue du Pr\'{e}sident Kennedy &
Laboratoire de Recherche en Informatique \\
75016 Paris, France &
91405 Orsay, France & Ishikawa, Japan\\
\end{tabular}}
}

\date{}

\remove{
\author{
\alignauthor Julien Clement\titlenote{Work supported by R\'egion \^ile-de-France}\\
       \affaddr{L.R.I}\\
       \affaddr{Universit\'e Paris Sud}\\
       \affaddr{Orsay, France}\\
       \email{clement@lri.fr}
\alignauthor Xavier Defago\titlenote{Work supported by MEXT Grant-in-Aid for Young Scientists (A) (Nr. 18680007).}\\
       \affaddr{School of information Science}\\
       \affaddr{JAIST}\\
       \affaddr{Ishikawa, Japan}\\
       \email{defago@jaist.ac.jp}
\alignauthor Maria Gradinariu Potop-Butucaru\\
       \affaddr{IRISA}\\
       \affaddr{Universit\'e Pierre et Marie Curie}\\
       \affaddr{Paris 6, France}\\
       \email{maria.gradinariu@lip6.fr}
\and  
\alignauthor Stephane Messika \\
       \affaddr{L.R.I}\\
       \affaddr{Universit\'e Paris Sud}\\
       \affaddr{Orsay, France}\\
       \email{messika@lri.fr}
 }
}       
 \maketitle

\begin{abstract} 
  In this paper we address the complexity issues of two 
agreement problems in oblivious robot networks namely gathering and
scattering. These abstractions are fundamental coordination problems in cooperative
  mobile robotics.  Moreover, their oblivious characteristics makes them appealing for self-stabilization 
since they are self-stabilizing with no extra-cost. 
Given a set of robots with arbitrary initial location and no initial agreement on a global coordinate
  system, {\it gathering} requires that all robots reach the exact same 
but not predetermined location while {\it scattering} aims at scatter robots such that no two robots share the
same location. Both deterministic gathering and scattering have been 
proved impossible under arbitrary schedulers therefore 
probabilistic solutions have been recently proposed. 

The contribution of this paper is twofold. First, we propose a
detailed complexity analysis of the existent probabilistic 
gathering algorithms  in both fault-free and
fault-prone environments\footnote{We consider both crash and byzantine-prone environments}. 
Moreover, using Markov chains tools and additional
assumptions on the environment we prove that the gathering 
convergence time can be reduced from $O(n^2)$ (the best known tight 
bound) to $O(nln(n))$. Additionally, we prove that in crash-prone environments 
gathering is achieved in $O(nln(n)+2f)$. 
Second, using the same technique we prove that the best known scattering strategy
converges in fault-free systems is $O(n)$ (which is one to optimal) 
while  in crash-prone environments it needs $O(n-f)$.
Finally, we conclude the paper with a discussion related 
to different strategies to gather oblivious robots.
\end{abstract}

\section{Introduction}
\bigskip
Many applications of mobile robotics envision groups of mobile robots
self-organizing and cooperating toward the resolution of common
objectives. In many cases, the group of robots is aimed at being
deployed in adverse environments, such as space, deep sea, or after
some natural (or unnatural) disaster. It results that the group must
self-organize in the absence of any prior infrastructure (e.g., no global
positioning), and ensure coordination in spite of faulty robots and
unanticipated changes in the environment.

Suzuki and Yamashita \cite{SY99} proposed a formal model to analyze
and prove the correctness of agreement problems in robot networks. 
In this model, robots are represented as
points that evolve on a plane. At any given time, a robot can be
either idle or active. In the latter case, the robot observes the
locations of the other robots, computes a target position, and moves
toward it. The time when a robot becomes active is governed by an
activation daemon (scheduler). Between two activations robots forget
the past computations. Interestingly, any algorithm proved correct 
in this model is also self-stabilizing. 

The \emph{gathering problem}, also known as the
\emph{Rendez-Vous} problem, is a fundamental coordination problem in
oblivious mobile robotics.  In short, given a set of robots with
arbitrary initial location and no initial agreement on a global
coordinate system, gathering requires that all robots, following their
algorithm, reach the exact same location---one not agreed upon
initially---within a \emph{finite} number of steps, and remain there. 
The dual of the gathering problem is the \emph{scattering problem}. 
Started in a arbitrary configuration scattering requires that
eventually no two robots share the same position.
Both scattering and gathering are agreement problems and
similar to the consensus problem in conventional distributed systems,
they have simple definitions but the existence of a solution
greatly depends on the synchrony of the systems as well as the nature
of the faults that may possibly occur. The task becomes even harder
since robots are anonymous since the impossibility results proved in classical distributed computing 
also hold in robot networks. Therefore,
specific problems like flocking, gathering or scattering are impossible without additional assumption. 
Interestingly, most of the work done so far in order to convey the
above impossibility results focuses on the additional assumptions the system
needs, less attention being shown 
to the use of randomization. Surprisingly, no formal framework 
was proposed in order to analyze the correctness and the
complexity of probabilistic algorithms designed for robots networks. 
In a companion paper, \cite{DGMP-disc06}, we investigated
some of the fundamental limits of deterministic and probabilistic
gathering face to a broad synchrony and fault
assumptions. Probabilistic scattering,
was analyzed for the first time in \cite{code}. None
of the previously mentioned works focus on a framework in
order to compute the complexity of proposed solutions.

In this paper we advocate that Markov chains are a simple and
efficient tool to analyze and compare 
probabilistic strategies in both fault-free and fault-prone
oblivious robot networks. Note that in robot networks computations
depend only on the current view of the robots without the use of 
the past. This behavior makes Markov chains an appealing
tool for analyzing their correctness and complexity since 
by definition a Markov chain models systems where 
the next configuration depends strictly on the current configuration. 
The only difficulty while using Markov chains 
is to associate to each probabilistic strategy the
appropriate Markov chain. In the current work we focus on the analysis of
existing probabilistic strategies for scattering and gathering
and advocate that our analysis can be easily applied to a broad class
of probabilistic strategies (e.g. leader election, 
flocking, constrained scattering, pattern formation).


\paragraph*{Contribution}
Our contribution is twofold. First, we show that the time complexity
of probabilistic gathering in
fault-free environments can be
improved from $O(n^2)$ to $O(nln(n))$ when the algorithms exploit
additional information related to the environment (eg. multiplicity 
knowledge). Additionally, in crash-prone
environments we prove that the convergence time of gathering is $O(nln(n)+2f)$. Second, 
we show that the tight bound for 
scattering is $O(n)$ (which is one to optimal) in fault-free systems
while in crash-prone environments
scattering converges is $O(n-f)$ rounds where
$f$ is the maximal bound on the number of faults. 
Intuitively gathering and scattering should have the same complexity however 
our analysis shows that scattering is much easier to obtain than gathering, $O(n)$ versus $O(nln(n))$. 
Reducing the gap in complexity between gathering and scattering seems to be 
an interesting research direction as it will be discussed 
in Section \ref{sec:gathering_via_scattering}.

\paragraph*{Structure of the paper}
The  paper is structured as follows. Section \ref{sec:Model} describes the robots network and system model. 
Section  \ref{sec:Problem} formally defines the gathering and
scattering problems. 
We propose the complexity analysis of existent probabilistic
scattering and gathering in Sections \ref{sec:gathering} and \ref{sec:scattering} in fault-free
and fault-prone environments. In Section
\ref{sec:gathering_via_scattering} we analyze an alternative strategy to 
gather oblivious robots. Section \ref{sec:conclusion} concludes the paper and
discusses some open problems.

\section{Model}
\label{sec:Model}

In the following we propose the model of our system. Most of the
definitions are borrowed from \cite{SY99,Pre01}.

\paragraph{Robot networks.}
We consider a network of a finite set of
robots arbitrarily deployed in a geographical area. The robots are
devices with sensing, computational and motion capabilities. They can
observe (sense) the positions of other robots in the plane and based
on these observations they perform some local
computations. Furthermore, based on the local computations robots may
move to other locations in the plane.

In the case robots are 
able to sense the whole set of robots they are referred 
as robots with \emph{unlimited visibility}; otherwise 
robots have limited visibility. In this paper, we consider that 
robots have unlimited visibility. 

In the case robots are able to distinguish if 
there are more than one robot at a given position they are referred 
as robots with \emph{multiplicity knowledge}.

\paragraph{System model.}
A network of robots that exhibit a discrete behavior
can be modeled with an I/O automaton~\cite{Lyn96}. A network of robots that
exhibit a continuous behavior can be modeled with a hybrid I/O automaton~\cite{LSV03}.
This framework allows the modeling of systems that exhibit both a discrete
and continuous behavior and in particular the modeling of robots
networks.
 
The actions performed by the automaton modeling a robot are as follows:
\begin{itemize}
\item \emph{Observation (input type action)}.\\
  An observation returns a snapshot of the positions of all the robots
  in the visibility range.
  In our case, this observation returns a snapshot of the positions of
  all the robots;
\item \emph{Local computation (internal action)}.\\
  The aim of a local computation is the computation of a destination
  point;
\item \emph{Motion (output type action)}.\\
  This action commands the motion of robots towards the destination
  location computed in the local computation action.
\end{itemize}
 
The local state of a robot at time~$t$ is the state of its
input/output variables and the state of its local variables and
registers.  A network of robots is modeled by the parallel composition
of the individual automata that model one per one the robots in the
network.  A configuration of the system at time~$t$ is the union of
the local states of the robots in the system at time~$t$.  An
execution $e=(c_0, \ldots, c_t, \ldots)$ of the system is an infinite
sequence of configurations, where $c_0$ is the initial configuration%
\footnote{Unless stated otherwise, this paper makes no specific
  assumption regarding the respective positions of robots in initial
  configurations.}
of the system, and every transition $c_i \rightarrow c_{i+1}$ is
associated to the execution of a subset of the previously defined
actions.

\paragraph{Schedulers.} A scheduler decides at each configuration the
set of robots allowed to perform their actions.
A scheduler is fair if, in an infinite execution, a robot is activated
infinitely often.  In this paper we consider the fair version of the
following schedulers:
\begin{itemize}
\item \emph{centralized}: at each configuration a single robot is
  allowed to perform its actions;
\item \emph{probabilistic}: at each configuration a set 
of robots chosen uniformly is activated;
\item \emph{$k$-bounded}: between two consecutive activations of a
  robot, another robot can be activated at most $k$~times;
\item \emph{arbitrary}: at each configuration an arbitrary subset of
  robots is activated.
\end{itemize}

\paragraph{Faults.}
In this paper, we address the following failures:
\begin{itemize}
\item \emph{crash failures}: In this class, we further distinguish two
  subclasses: (1)~robots physically disappear from the network, and
  (2)~robots stop all their activities, but remain physically present
  in the network;
\item \emph{Byzantine failures}: In this case, robots may have an
  arbitrary behavior.
\end{itemize}

\paragraph{Computational models.}
The literature proposes two computational models: ATOM and CORDA.  The
ATOM model was introduced by Suzuki and Yamashita~\cite{SY99}. In this
model each robot performs, once activated by the scheduler, a
\emph{computation cycle} composed of the following three actions:
observation, computation and motion. The particularity 
of the ATOM model versus the CORDA model is that in ATOM a computation
cycle is atomic while in CORDA the atomicity concerns only the
actions. In order words CORDA models and asynchronous networks 
while ATOM a synchronous one.

In this paper, we consider the ATOM model. Moreover, we consider that robots are
oblivious (i.e.,~stateless). That is, robots do not conserve any
information between two computational cycles.%
\footnote{One of the major motivation for considering oblivious robots
  is that, as observed by Suzuki and Yamashita~\cite{SY99}, any
  algorithm designed for that model is inherently self-stabilizing.}
We also assume that all the robots in the system have unlimited
visibility.

\section{Gathering and Scattering}
\label{sec:Problem}
A network of robots is in a \emph{legitimate configuration} with respect to 
the gathering requirement if all robots in the system share the same position in 
the plane. 
Let denote by ${\mathcal P}_\mathit{Gathering}$ this predicate.

An algorithm solves the gathering problem in an oblivious system 
if the following two properties are verified:
\begin{itemize}
\item \textbf{Convergence} Any execution of the system starting in an arbitrary 
configuration reaches in a finite number of steps a configuration that satisfies 
$\mathcal{P}_{Gathering}$.
\item \textbf{Closure} Any execution starting in a legitimate 
configuration with respect to the $\mathcal{P}_\mathit{Gathering}$ predicate 
contains only legitimate configurations.
\end{itemize}

Gathering is difficult to achieve in most of the environments. Therefore,
weaker forms of gathering were studied so far. An interesting 
version of this problem requires robots to
\emph{converge} toward a single location rather than reach that
location in a finite time. The convergence is however
considerably easier to deal with. For instance, with unlimited visibility,
convergence can be achieved trivially by having robots moving toward
the barycenter of the network \cite{SY99}.  


Scattering, introduced first in \cite{suzuki-yamashita}, aims at arranging a set of
robots such that eventually no two robots share the same position. Let
denote by ${\mathcal P}_{Scattering}$ this predicate.
Formally, scattering is defined by the following two properties :
\begin{itemize}
\item \textbf{Convergence} Any execution of the system starting in an arbitrary 
configuration reaches in a finite number of steps a configuration that satisfies 
$\mathcal{P}_{Scattering}$. 
\item \textbf{Closure} Any execution starting in a legitimate configuration with
respect to the predicate ${\mathcal P}_{Scattering}$ contains only
legitimate configurations.
\end{itemize}

In the sequel we address the convergence time of gathering and scattering in both 
fault-free and fault-prone environments. As stated in the model 
we consider both crash-prone and byzantine-prone systems.
Let $(n,f)$ denote a system with $n$ correct robots but $f$ and the considered faults are 
crashes and byzantine behavior.  As mentioned in Section~\ref{sec:Model} in
a $(n,f)$ crash-prone system there are two types of crashes: (1)~the
crashed robots completely disappear from the system, and (2)~the
crashed robots are still physically present in the system, however
they stop the execution of any action.
In \cite{DGMP-disc06}, we proved that if the faulty
robots disappear from the system, then the problem trivially reduces
to the study of its fault-free version with $n\!-\!f$~correct
robots. In contrast, in systems where faulty robots remain physically
present in the network after crashing, the problem is far from being
trivial.  A similar argument can be provided for the case of byzantine
behavior.
Obviously, gathering or scattering all the robots in the system including the faulty
ones is impossible since faulty robots may possibly have crashed at
different locations or collude as shown in \cite{DGMP-disc06}.
Therefore, we study the feasibility of weaker versions of
gathering and scattering, referred to as \emph{weak gathering}
respectively \emph{weak scattering}. The
$(n,f)$-\emph{weak} problem requires that, in a terminal
configuration, only the \emph{correct} robots must verify the specification.  

\section{Analysis framework}
In this section we introduce some notations and definitions that will
be further used in order to analyze the convergence time of  
the probabilistic gathering and scattering. A detailed description of 
the notions defined below can be found in \cite{norris}.

\paragraph{Random variables}
We denote $X_n$ a random variable. For instance, in our case it might be the number of groups of size $x$ 
after $n$ steps of the algorithm. 
We will study a discrete-time stochastic process, that is : a sequence $\{X_n\}_{n \geq 0}$ of random variables.

In the sequel we will use the following notations:
\begin{itemize}
 \item $\PP[X_n=x]$ the probability of the event $\{X_n=x\}$.
 \item $\E[X_n]$ the expectation of $X_n$ .
 \item The probability distribution of a random variable $X : k \mapsto \PP[X=k] $ for all $k$
 \item Conditional probability will be written $\PP[A \mid B]$, and will be read "the probability of A, given B".
\end{itemize}

\paragraph{Markov chains}
Markov chains are particular classes of stochastic processes. These
stochastic processes have the following fundamental property : 
the probabilistic dependence on the past is only related to the previous state.  
 
\begin{definition}
Let $\left(X_n \right)_{n \in \NN}$ be a discrete time stochastic
process with countable state space $E$. 
If for all integers $n \geq 0$ and all states $i_0,i_1,\hdots,i_{n-1},i,j$:
$$\PP[X_{n+1}=j \mid X_n=i,X_{n-1}=i_{n-1},\hdots,X_0=i_0]=\PP[X_{n+1}=j \mid X_n=i]$$
Whenever both sides of the above equality are well defined, this
stochastic process is called Markov chain. 
A Markov chain is homogeneous (HMC) if the right side is independent of $n$.
\end{definition}

In the following we propose an example of Markov chain.
\begin{texample}
\label{toy:1}
Consider a robot performing a random walk in a two-dimensional
space. The robot at position $(i,j)$ chooses with equal
probability as destination point a neighbor of $(i,j)$ in the space $\ZZ\times\ZZ$.  
Let $X_n$ be the robot position after $n$ steps: 
\begin{itemize}
\item $\PP[X_{n+1}=(i,j) \mid X_n=(i-1,j)]=1/4$
\item $\PP[X_{n+1}=(i,j) \mid X_n=(i,j-1)]=1/4$
\item $\PP[X_{n+1}=(i,j) \mid X_n=(i+1,j)]=1/4$
\item $\PP[X_{n+1}=(i,j) \mid X_n=(i,j+1)]=1/4$
\end{itemize}
\end{texample}
 
\begin{note} Note that 
Toy Example \ref{toy:1} is a Markov Chain since every position of
the robot is only dependent on the previous one.
\end{note}

In this paper we advocate that Markov chains are a simple verification tool, perfectly adapted 
to the analysis of distributed strategies in oblivious robot networks since in these networks 
the next move of a robot depends only on its current position.

\section{Probabilistic Gathering}
\label{sec:gathering}
In this section we analyze the complexity of probabilistic gathering in fault-free
and fault-prone environments. The algorithms analyzed in this section
were proposed in \cite{DGMP-disc06}.

\subsection{Gathering in fault-free environments}
In this section we prove that additional information on the environment drastically improves 
the time convergence of gathering. Using multiplicity knowledge, for example, we obtain a tight bound of $O(nln(n))$ which improves 
the best known bound of $O(n^2)$. 
In \cite{DGMP-disc06} we proposed a probabilistic algorithm that
solves the fault-free gathering in ATOM, under a special class 
of schedulers, known as $k$-bounded schedulers. 
A  robot, when chosen by the scheduler, selects randomly one of its neighbors and moves towards
its position with probability $\frac{1}{\delta}$ where $\delta$ is the
size of the robot's view. In the considered model the robots have
unlimited visibility and the value of $\delta$ is $n$. 
We proved that this strategy probabilistically solves $2$-gathering in the \textsl{ATOM} model 
under an arbitrary scheduler and converges in $2$ steps in
expectation.  
We also proved that it solves the
  $n$-gathering problem ($n \geq 3$), under a fair
  $k$-bounded scheduler without multiplicity knowledge and converges under fair
  bounded schedulers in $n^2$ rounds \footnote{A round is the shortest
fragment of an execution i which each process in the system executed
at least once its actions.}
  in expectation.

In the following we show that the $O(n^2)$ complexity bound can be
reduced to $O(nln(n))$ when robots use the multiplicity. The algorithm
that meets this bound was initially proposed in \cite{DGMP-disc06}
in order to cope with robots crash. The algorithm, shown as Algorithm
\ref{alg:ft-prob-gathering}, works as follows.
When a robot is chosen by the scheduler moves to a group with
maximal multiplicity. When several groups have the same maximal
multiplicity, then a robot member of such group 
tosses a coin to decide if it moves or holds the current position.
Interestingly, the multiplicity knowledge (used so far in order to
break the symmetry of the system) can also be used in order to
fasten gathering. 

\begin{algorithm}
\hspace*{3cm}\textbf{Functions}:\\
\hspace*{3cm}$\observe::$ returns the set of robots within the\\
\hspace*{3cm}vision range of robot $p$ (the set of $p$'s neighbors);\\
\hspace*{3cm}$\maxmult::$ returns the set of robots with the \\
\hspace*{3cm}maximal multiplicity;\\
\noindent
 \hspace*{3cm}\textbf{Actions}:\\
    \noindent
    \hspace*{3.3cm}
    ${\mathcal A}_{1}::\ true \longrightarrow$ \\
    \hspace*{4cm} $\mathcal{N}_p=\observe()$;\\
    \hspace*{4cm} if $p \in \maxmult(\mathcal{N}_p) \wedge \left|\maxmult\left(\mathcal{N}_p\right)\right| > 1$ then\\
    \hspace*{5cm} 
    with probability $\frac{1}{\left|\maxmult\left(\mathcal{N}_p\right)\right|}$do\\
    \hspace*{6cm} select a robot $q \in \maxmult(\mathcal{N}_p)$;\\
    \hspace*{6cm} move towards $q$;\\
    \hspace*{4cm} else\\
    \hspace*{5cm} select a robot $q \in \maxmult(\mathcal{N}_p)$;\\
    \hspace*{5cm} move towards $q$;\\
\caption{Probabilistic gathering for robot $p$ with multiplicity knowledge.  \label{alg:ft-prob-gathering}}
\end{algorithm}

\begin{lemma}
\label{lemma:ff-ct}
In a fault-free environment the convergence time of Algorithm
\ref{alg:ft-prob-gathering} 
is $\alpha_nln(\alpha_n)+1$, with $\alpha_n=[\frac{n}{2}]+1$.
\end{lemma}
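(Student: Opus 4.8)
The plan is to model one execution of Algorithm~\ref{alg:ft-prob-gathering} as a homogeneous Markov chain whose state records the current partition of the $n$ robots into groups, and to read the convergence time off the expected hitting time of the absorbing state in which a single group remains. Since robots are oblivious and each move depends only on the current observation, the sequence of configurations is genuinely Markovian, so this is legitimate. Rather than tracking exact positions, I would work with the coarser state $G_t$ = number of occupied positions (groups) at round $t$, which is all that the analysis needs.

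The first key step is a monotonicity claim: under the algorithm $G_t$ never increases, because a robot only ever moves onto a position already occupied by a maximal-multiplicity group, so no new position is ever created; moreover, once a \emph{unique} maximal group exists it can only grow, since every other robot that is activated executes the deterministic else-branch and moves onto it. Hence the gathered configuration $G=1$ is absorbing and is reached monotonically, which lets me treat the convergence time as a sum of per-decrement waiting times. I would then identify the worst-case initial configuration and the parameter $\alpha_n$: the number of groups is maximized when the robots are split as finely as the multiplicity mechanism still allows them to be re-merged, giving at most $\lfloor n/2\rfloor+1=\alpha_n$ groups to collapse, which is the source of the factor $\alpha_n$ in the bound.

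For the heart of the argument I would compute, for a configuration with $k$ remaining groups, the probability that one round strictly decreases $G_t$. The estimate I would establish is that this one-round success probability is at least $k/\alpha_n$, so the number of rounds spent with exactly $k$ groups is dominated by a geometric variable of mean at most $\alpha_n/k$. Summing these successive waiting times by linearity of expectation over the phases $k=\alpha_n,\alpha_n-1,\dots,2$, and adding one final round for the last collapse in which the surviving robots join the unique maximal group, gives
$$ \E[T] \;\le\; 1 + \sum_{k=2}^{\alpha_n}\frac{\alpha_n}{k} \;=\; 1 + \alpha_n\bigl(H_{\alpha_n}-1\bigr) \;\le\; \alpha_n\ln\alpha_n + 1, $$
where $H_{\alpha_n}=\sum_{k=1}^{\alpha_n}1/k$ is the harmonic number and I use $H_{\alpha_n}-1\le\ln\alpha_n$. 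This is exactly the claimed bound, and it is of order $O(n\ln n)$ since $\alpha_n=\lfloor n/2\rfloor+1$.

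The main obstacle is the per-round transition probability, and specifically the symmetry-breaking coin toss. When several groups share the maximal multiplicity the else-branch is not taken, and progress requires that the coins flipped by the $\left|\maxmult(\mathcal{N}_p)\right|$ eligible robots break the tie rather than reproduce it; bounding from below the probability that a round creates a strict unique maximum (and hence decrements $G_t$) is the delicate computation, and it is precisely where the ratio $k/\alpha_n$ must be made rigorous. Once that single one-round estimate is secured, the monotonicity reduction and the harmonic summation above are routine.
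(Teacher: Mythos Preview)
Your proposal has a genuine gap: you have misidentified where $\alpha_n$ comes from, and this makes two of your three steps fail.

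You claim that the number of occupied positions $G_t$ is at most $\alpha_n=\lfloor n/2\rfloor+1$ in the worst case. This is false: the initial configuration may have all $n$ robots at $n$ distinct positions, so $G_0=n$ is possible and your harmonic sum would have to start at $k=n$, not $k=\alpha_n$. There is nothing in the multiplicity mechanism that bounds the number of groups by $\alpha_n$. Consequently your asserted per-round success probability ``at least $k/\alpha_n$'' cannot be right either; the coin biases in the algorithm involve $|\maxmult|$, which can be as large as $n$, so there is no reason $\alpha_n$ should appear in the denominator. You yourself flag this estimate as the delicate step and do not carry it out, and in fact it cannot be carried out in this form.

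The paper tracks a different variable: $X_t$ is the size of the \emph{largest} group, which increases toward~$n$. The parameter $\alpha_n$ enters as a \emph{threshold}, not a starting value: once $X_t\ge\lfloor n/2\rfloor+1$, the largest group holds a strict majority, hence is the unique maximal-multiplicity group, and every robot activated thereafter executes the deterministic else-branch and joins it in the next round. That final round is the ``$+1$''. The expected time for $X_t$ to climb from $1$ to $\alpha_n$ is then computed from a birth chain on $\{1,\dots,n\}$ with transition $l\to l+1$ of probability $(n-l)/n$; summing the geometric waiting times up to level $\alpha_n$ produces the harmonic bound $\alpha_n\ln\alpha_n$. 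Your monotonicity observation and the idea of summing geometric phases are both sound, but they need to be applied to the size of the maximal group, not to the number of groups, for $\alpha_n$ to appear in the right place.
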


\begin{proof}
In order to study the convergence time of Algorithm
\ref{alg:ft-prob-gathering} we introduce the following stochastic process :
$\forall \ t , ~\ X_t=k$ means that at round $t$ the group with the
maximal multiplicity has $k$ robots. Note that when $k$ equals $1$ all robots are scattered
such that no two robots share the same position.
Figure \ref{fig:stochastic-ftprob} proposes the probability transition of the 
random variable $X_t$.

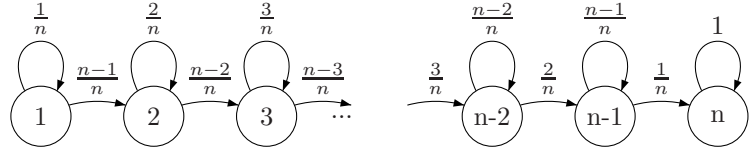
\begin{figure}

\hrule
\bigskip
\begin{picture}(130,25)
\node(A)(40,10){1}
\node(B)(55,10){2}
\node(C)(70,10){3}
\node(D)(100,10){n-2}
\node(E)(115,10){n-1}
\node(F)(130,10){n}
\node[Nframe=n](C')(85,10){}
\node[Nframe=n](D')(85,10){}
\node[Nframe=n](D'')(80,10){...}
\gasset{curvedepth=2}
\drawedge(A,B){$\frac{n-1}{n}$}
\drawedge(B,C){$\frac{n-2}{n}$}
\drawedge(C,C'){$\frac{n-3}{n}$}
\drawedge(D',D){$\frac{3}{n}$}
\drawedge(D,E){$\frac{2}{n}$}
\drawedge(E,F){$\frac{1}{n}$}
\drawloop[loopdiam=6,loopangle=90](A){{$\frac{1}{n}$}}
\drawloop[loopdiam=6,loopangle=90](B){{$\frac{2}{n}$}}
\drawloop[loopdiam=6,loopangle=90](C){{$\frac{3}{n}$}}
\drawloop[loopdiam=6,loopangle=90](D){{$\frac{n-2}{n}$}}
\drawloop[loopdiam=6,loopangle=90](E){{$\frac{n-1}{n}$}}
\drawloop[loopdiam=6,loopangle=90](F){1}
\end{picture}
\hrule
\caption{The stochastic process $X_t$ associated to Algorithm \ref{alg:ft-prob-gathering}}
\label{fig:stochastic-ftprob}
\end{figure}

As soon as a group of $[\frac{n}{2}]+1$ robots is formed, the convergence
needs only one additional round. That is, this group
will be the unique group of maximal cardinality. 
Therefore it will be an attractor for all the other robots and within
one additional round the convergence is achieved. 
In the following we compute the time needed to the stochastic 
process $(X_t)_{(t \in \NN^*)}$ to reach $[\frac{n}{2}]+1$. 
We define the expectation of the time needed for the stochastic process $X_t$ defined above
to reach state $k$, starting from state $l$. Formally,
\begin{eqnarray*}
T_l^k = \E \left[ min\{t \ \text{such that} \ X_t=k \ \text{knowing} \ X_0=l\} \right]
\end{eqnarray*}

According to Figure \ref{fig:stochastic-ftprob} we obtain the following induction formula:
\begin{eqnarray*}
T_l^{k}=1+\frac{l}{n}T_l^k+\frac{n-l}{n}T_{l+1}^{k}
\end{eqnarray*}
which leads to $(T_l^{k}-T_{l+1}^{k})=\frac{n}{n-l}$. 
Therefore, if we note $\alpha_n=[\frac{n}{2}]+1$
\begin{eqnarray*}
T_1^{\alpha_n }=\displaystyle \sum_{k=1}^{\alpha_n-1}
\frac{\alpha_n}{\alpha_n-k}= 
\displaystyle \alpha_n \sum_{k=1}^{\alpha_n-1}\frac{1}{k}  \leq \alpha_n ln(\alpha_n)
\end{eqnarray*}
\end{proof}

\subsection{Gathering in Fault-prone environments}
In this section we study the complexity of gathering in both crash and byzantine prone systems.
We prove  that crash tolerant gathering can be
achieved in $O(nln(n)+2f)$ while byzantine tolerant gathering is
exponential and needs additional assumptions (e.g. probabilistic scheduler).

\subsubsection{Crash-tolerant gathering}
In \cite{DGMP-disc06} we proved the impossibility of deterministic and
probabilistic weak gathering (gathering of correct robots) under centralized
bounded and fair schedulers and without additional assumptions.
An immediate consequence of this result is the necessity of 
additional assumptions (e.g.,~multiplicity knowledge), even for
probabilistic solutions under bounded schedulers. In the following we study the complexity of 
Algorithm \ref{alg:ft-prob-gathering} in fault-prone environments.

In order to compute the convergence of Algorithm
\ref{alg:ft-prob-gathering} we consider the worst scenario.
Recall that in a fault-free environment as soon a group of
$[\frac{n}{2}]+1$ robots is built, 
only one additional round is required to reach convergence. 
Our scenario goes as follows. Assume that as soon as the group of
maximal cardinality has $[\frac{n}{2}]+1$ robots, a crash occurs so
that the stochastic chain goes one step backward. We recall that only $f$ crashes may happen.
The following lemma computes the convergence time of Algorithm \ref{alg:ft-prob-gathering} 
according to the above described scenario.

\begin{lemma}
In a crash prone environment the convergence time of Algorithm
\ref{alg:ft-prob-gathering} 
is $O\left(\alpha_nln(\alpha_n)+ 2f \right)$ with $\alpha_n=[\frac{n}{2}]+1$.
\end{lemma}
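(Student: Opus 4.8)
The plan is to build directly on Lemma~\ref{lemma:ff-ct}, which already establishes that in the fault-free setting the process $(X_t)$ reaches the threshold state $\alpha_n = [\frac{n}{2}]+1$ in expected time at most $\alpha_n \ln(\alpha_n)$, after which a single additional round completes the gathering. The worst-case scenario described just before the statement is the key device: each crash is assumed to strike precisely when the maximal group first reaches size $\alpha_n$, pushing the chain one state backward (from $\alpha_n$ to $\alpha_n - 1$). Since only $f$ crashes may occur, I would model the overall convergence as the fault-free hitting time plus the total cost of recovering from these $f$ setbacks.

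First I would quantify the cost of a single recovery step. Using the induction formula already derived in the proof of Lemma~\ref{lemma:ff-ct}, namely $T_l^k - T_{l+1}^k = \frac{n}{n-l}$, the expected time to climb back from state $\alpha_n - 1$ to $\alpha_n$ is $T_{\alpha_n-1}^{\alpha_n} = \frac{n}{n-(\alpha_n-1)}$. With $\alpha_n = [\frac{n}{2}]+1$ we have $n - (\alpha_n - 1) \approx \frac{n}{2}$, so this one-step recovery costs about $\frac{n}{n/2} = 2$ rounds in expectation. Thus each of the $f$ crashes contributes a constant factor of roughly $2$ to the convergence time, and summing over all $f$ crashes yields an additive overhead of $2f$.

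Next I would assemble the bound by linearity of expectation: the total expected convergence time is the base cost $T_1^{\alpha_n} \le \alpha_n \ln(\alpha_n)$ to first reach the threshold, plus the $2f$ rounds spent recovering from the $f$ adversarially-timed crashes, plus the final convergence round. This gives $O(\alpha_n \ln(\alpha_n) + 2f)$ as claimed. I would be careful to argue that the scenario posited is genuinely the worst case — that is, that delaying a crash until the chain has reached $\alpha_n$ and forcing a one-step regression is more costly than striking earlier, where the backward transition probabilities and the per-step climbing cost differ.

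The main obstacle I anticipate is making the worst-case claim rigorous rather than merely plausible. A crash occurring lower in the chain (at some state $l < \alpha_n$) costs only $\frac{n}{n-l} < 2$ per recovery step, which is cheaper than a crash near $\alpha_n$; however, one must also account for the fact that a crash reduces the total number of robots, which shifts the threshold $\alpha_n$ and alters the transition probabilities $\frac{n-l}{n}$ in Figure~\ref{fig:stochastic-ftprob}. Reconciling the changing state space with the clean additive $2f$ term requires either showing the perturbation to the base term is absorbed into the $O(\cdot)$, or arguing that the stated scenario dominates uniformly. Establishing this domination cleanly, rather than hand-waving that ``each crash costs at most a constant,'' is where the real work lies.
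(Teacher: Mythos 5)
Your argument mirrors the paper's proof exactly: the same worst-case scenario (each of the $f$ crashes strikes just as the chain first reaches $\alpha_n$, forcing a one-step regression to $\alpha_n-1$), the same expected recovery cost of $\frac{n}{n-\alpha_n+1}\approx 2$ per crash obtained from the fault-free recurrence, and the same additive decomposition giving $\alpha_n \ln(\alpha_n)+2f$. The rigor concerns you flag at the end --- proving that this scenario actually dominates, and accounting for the shrinking robot population after crashes --- are genuine, but the paper's own proof leaves them equally unaddressed, so your proposal is no weaker than the original.
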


\begin{proof}
We define  the following stochastic process to compute the convergence
time of the algorithm in a crash prone environment.
$\forall \ t, ~~ \ Y_t=k$ means that, at round $t$, the group with the
maximal cardinality has $k$ robots. The system transitions are as
follows:
\begin{itemize}
	\item $\PP[Y_t=k \mid Y_{t-1}=k]=\frac{k}{n}$
  \item $\PP[Y_t=k \mid Y_{t-1}=k-1]=\frac{n-k}{n}$
  \item Moreover, each time a crash occurs, the stochastic process
goes backward. 
\end{itemize}

According to Lemma \ref{lemma:ff-ct} the group of maximal cardinality
of $\frac{n}{2}$ robots is formed within $\alpha_nln(\alpha_n)$ rounds.
In the following we focus the time needed to the chain associated with
Algorithm \ref{alg:ft-prob-gathering} to move from state $\alpha_n-1$ to $\alpha_n$. The probability transition of this event is 
$\PP[Y_t=\alpha_n \mid Y_{t-1}=\alpha_n-1]= \frac{n-\alpha_n}{n}$. The mathematical expectation of the time needed
to perform this transition is : $\frac{n}{n-\alpha_n}\approx 2$. Therefore,
the convergence time of Algorithm \ref{alg:ft-prob-gathering} is $\alpha_nln(\alpha_n)+2f$.
\end{proof}

\begin{note}
Note that the above results hold even if the crashed robots are still physically present in the system but
stop the execution of any action.
\end{note}

\subsubsection{Byzantine-tolerant gathering}
In this section we address the complexity of byzantine-tolerant
probabilistic gathering. In the following $(n,f)$ denotes a system
where at most $f$ robots can have byzantine behavior.

In \cite{DGMP-disc06} we conjectured that  Algorithm~\ref{alg:ft-prob-gathering}
also solves $(n,f)$-weak byzantine-tolerant gathering problem when $n
\geq 3$ under bounded schedulers and multiplicity detection.
The counter-example shown in Figure \ref{fig:ce} advocates that the
boundedness of the scheduler should depend on the ration between correct
and byzantine robots otherwise the algorithm does not converge. That is,
assume an execution starting in a configuration with two groups of two
robots each and assume the right group has a byzantine robot (the
robot in red on the picture). The scheduler chooses one robot in the
left group and moves it in the right group. Then it chooses the
byzantine robot which (even if multiplicity is used) moves 
in the left group. Then the scheduler chooses any correct robot in the
right group and moves it in the left group. Then the byzantine is
chosen and moves to the right group. This configuration is symmetrical
to the initial configuration.

\begin{figure*}
\hrule
\hspace{3cm}\psfig{file=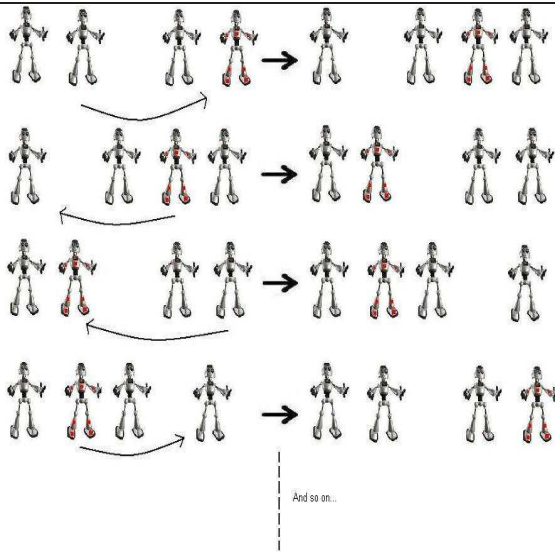, height=3in, width=4in}
\hrule
\caption{Counter Example: Algorithm~\ref{alg:ft-prob-gathering} does
not solve the problem under an arbitrary bounded scheduler}\label{fig:ce}
\end{figure*}

However byzantine-tolerant gathering is possible under special conditions.
The next lemma proves that byzantine-tolerant gathering is possible
when both the algorithm \textbf{and} the scheduler are
probabilistic. Here we use the power of random choice. That is, there is a
positive probability that the
byzantine robot is not chosen in a finite number of steps.

\begin{lemma}
  In systems with Byzantine faults, Algorithm \ref{alg:ft-prob-gathering}
  probabilistically solves the $(n,f)$-weak byzantine-tolerant gathering, $n \geq 3$, problem
  under a probabilistic scheduler and multiplicity detection.
  \end{lemma}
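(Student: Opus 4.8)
The plan is to exploit the probabilistic nature of the scheduler in order to reduce the Byzantine case to the fault-free analysis of Lemma \ref{lemma:ff-ct}. The key observation, already hinted at in the discussion of the counter-example, is that under a probabilistic scheduler there is at every round a strictly positive probability $q>0$ that the activated set contains at least one correct robot but \emph{no} Byzantine robot. Consequently, for any fixed horizon $T$, the probability of a ``Byzantine-silent'' window of $T$ consecutive rounds is bounded below by $q^{T}>0$. During such a window the $f$ Byzantine robots never move, so they act as fixed points and the correct robots evolve according to the fault-free dynamics of Algorithm \ref{alg:ft-prob-gathering}. The whole proof is then a standard ``positive probability of progress in a bounded window, repeated infinitely often'' argument.

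First I would make the closure property precise: once the $n$ correct robots share a single location they form a group of multiplicity $n$, which (whenever $f<n$) strictly dominates any group the Byzantine robots can assemble, so the correct group is the unique \maxmult group and no correct robot is ever induced to leave it; hence a legitimate configuration for weak gathering persists. Next I would bound the convergence probability \emph{inside} a silent window. By Lemma \ref{lemma:ff-ct}, in the fault-free regime the process $X_t$ reaches a group of $\alpha_n=[\frac{n}{2}]+1$ correct robots, which is an attractor from which gathering follows in one further round, in expected time $\alpha_n\ln(\alpha_n)+1$. By Markov's inequality there is a horizon $T=T(n)$ and a constant $p>0$ such that, conditioned on a silent window of length $T$, the correct robots gather with probability at least $p$.

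I would then conclude convergence with probability one. Combining the two bounds, every block of $T$ rounds carries probability at least $p\,q^{T}>0$ of achieving gathering, and crucially this lower bound is \emph{uniform} over the current configuration. Therefore the event ``gathering is not reached within the first $mT$ rounds'' has probability at most $(1-p\,q^{T})^{m}$, which tends to $0$ as $m\to\infty$; so the correct robots gather in finite time almost surely. Together with closure, this shows that Algorithm \ref{alg:ft-prob-gathering} probabilistically solves the $(n,f)$-weak Byzantine-tolerant gathering problem for $n\geq 3$ under a probabilistic scheduler and multiplicity detection.

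The hard part will be the second step: controlling the influence of the stationary Byzantine robots on the \maxmult computation inside a silent window. Even when they do not move, the positions occupied by Byzantine robots enter every correct robot's view and may tie with, or temporarily exceed, the correct robots' maximal group, so the transition probabilities of Figure \ref{fig:stochastic-ftprob} do not transfer verbatim. I expect to resolve this by arguing that fixed occupied points can only add extra attractors without ever splitting the growing correct-robot group, so that the coalescence of the correct robots is merely delayed, not prevented; the window length $T$ can then be enlarged to absorb this delay while keeping the progress probability $p\,q^{T}$ bounded away from zero.
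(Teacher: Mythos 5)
Your overall skeleton is the same as the paper's: exhibit a uniformly positive probability $\varepsilon>0$ of reaching the ``attractor'' configuration (a group of $\frac{N}{2}+1$ correct robots) within a bounded window, then conclude from $(1-\varepsilon)^k\to 0$ that gathering happens almost surely. The paper, however, realizes each window much more crudely and therefore more robustly: it simply lower-bounds the probability of one explicit lucky trajectory of $\frac{N}{2}+1$ steps (the scheduler repeatedly activates a suitable correct robot and that robot makes the right random choice, each step contributing a factor of at least $\frac{1}{N}$), giving $\varepsilon=\left(\frac{1}{N}\right)^{\frac{N}{2}+1}$. It never needs to transfer the fault-free Markov chain of Lemma~\ref{lemma:ff-ct} into the Byzantine setting.

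That transfer is exactly where your proposal has a genuine gap, and you have correctly identified it yourself without closing it. The claim you hope to prove --- that stationary Byzantine positions ``can only add extra attractors without ever splitting the growing correct-robot group'' --- is false as stated. If the $f$ Byzantine robots sit at a common location whose multiplicity ties with the largest correct group, both groups belong to \maxmult, so a correct robot in the correct group tosses a coin and, upon deciding to move, may select the Byzantine location as its target; the correct group can therefore shrink even in a ``Byzantine-silent'' window. Consequently the transition probabilities of Figure~\ref{fig:stochastic-ftprob} do not apply, Lemma~\ref{lemma:ff-ct} cannot be invoked, and the constant $p$ in your bound $p\,q^{T}$ is not justified. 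To repair the argument you should abandon the reduction to the fault-free chain and instead do what the paper does: condition on a single explicit sequence of scheduler activations and coin outcomes that monotonically builds a majority-correct group, and lower-bound its probability directly; this yields a window-success probability that is uniform over configurations (including the Byzantine robots' positions), after which your concluding geometric-trials step, and your closure argument, go through unchanged.
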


\begin{proof}
The proof  is based on the fact that as soon as there exists a group
of $\frac{N}{2}+1$ correct 
robots gathered, an additional round is needed to achieve convergence
 That is, every time a correct robot is selected by the scheduler, it
joins this group. 
Therefore, beyond this point,  the convergence time only depends on the scheduler.

In the following we study the probability to create a group verifying the
above stated property.
We define $\mathcal{L}$ : There exists a group of $\frac{N}{2}+1$ correct robots gathered.
\begin{eqnarray*}
\PP \left[ \ \text{reach} \ \mathcal{L} \ \text{in} \ \frac{N}{2}+1 \ \text{steps}  \right] > \displaystyle \left(\frac{1}{N}\right)^{(\frac{N}{2}+1)}= \varepsilon
\end{eqnarray*}
So
\begin{eqnarray*}
\PP \left[ \neg (\text{reach} \ \mathcal{L} \ \text{in} \ (\frac{N}{2}+1) \ \text{steps})  \right] \leq (1-\varepsilon)
\end{eqnarray*}
Therefore : 
\begin{eqnarray*}
\forall k, \ \PP \left[ \neg (\text{reach} \ \mathcal{L} \ \text{in} \ k(\frac{N}{2}+1) \ \text{steps})  \right] \leq (1-\varepsilon)^k
\end{eqnarray*}
\begin{eqnarray*}
\displaystyle \lim_{k \rightarrow \infty} \PP \left[ \neg (\text{reach} \ \mathcal{L} \ \text{in} \ k(\frac{N}{2}+1) \ \text{steps})  \right] = 0
\end{eqnarray*}
\end{proof}

Note that in our scenario the convergence time is exponential.
In order to simplify the calculations we consider: 
$\left(\frac{1}{N}\right)^N$ instead of $\left(\frac{1}{N}\right)^{(\frac{N}{2}+1)}$.

So, $\left[1-\left(\frac{1}{N}\right)^N \right]^t \leq \alpha$. This 
leads to: $$t ln\left(1-\left(\frac{1}{N}\right)^N \right)\leq
ln\alpha$$ 

Overall, $t$ verifies:
$$t\geq \frac{ln \alpha}{ln\left(1-\left(\frac{1}{N}\right)^N \right)} \sim \ln\left(\frac{1}{\alpha}\right)N^N$$

\begin{remark}
In order to prove the convergence we considered one of the worst
possible scenario. Therefore, we did not prove that the convergence time of the algorithm
is exponential. In order to do so, we would have to exhibit a set of
\textbf{non-null measure} 
of executions which converges in an exponential time.
\end{remark}

\section{Probabilistic scattering}
In this section we address another agreement problem : the
scattering. The unique existing probabilistic solution for scattering was proposed
in \cite{code} (see Algorithm \ref{SP}).
Analyzing the complexity of Algorithm \ref{SP} in both fault-free and fault-prone environments 
we prove that scattering is much easier to
achieve than gathering. The next section addresses Algorithm \ref{SP}
convergence in fault-free environments then we prove the correctness
and compute the complexity of the same algorithm in fault-prone systems.

\label{sec:scattering}
\subsection{Scattering in Fault-free environments}
Petit {\it et al.} \cite{code} proved that deterministic scattering is impossible
in ATOM model without additional assumptions and proposed a
probabilistic solution based on the use of Voronoi
diagrams defined below.
\begin{definition} 
Let ${\mathcal P}=\{p_1,p_2,\hdots,p_n\}$ be a set of points in the Cartesian 2-dimensional plane.
The Voronoi diagram of ${\mathcal P}$ is a subdivision of the plane into
$n$ cells, one 
for each point in ${\mathcal P}$. The cells have the property that a
point $q$ belongs to the Voronoi cell of point 
$p_i$ iff for any other point $p_j \in {\mathcal P}$,
$dist(q,p_i)<dist(q,p_j)$ where $dist(p,q)$ is the Euclidean distance
between $p$ and $q$. 
In particular, the strict inequality means that points located on the 
boundary of the Voronoi diagram do not belong to any Voronoi cell.
\end{definition}

The algorithm proposed in \cite{code} (see Algorithm \ref{SP}) is as follows.
Each robot uses a function $Random()$ that returns a value probabilistically 
chosen in the set $\{0,1\}$ : $0$ with probability $\frac{3}{4}$ and $1$ with
probability $\frac{1}{4}$. When a robot $r_i$ becomes active at time
$t$, it first computes the 
Voronoi Diagram of $P_{r_i}(t)$, i.e., the set of points occupied by the
robots. Then, $r_i$ moves toward a point
inside its Voronoi $Cell_i$ if $Random()$ returns $0$. 

\begin{algorithm}
    Compute the Voronoi Diagram;\\
    $Cell_i:=$ the Voronoi cell where $r_i$ is located; .\\
		$Current\_Pos:=$ position where $r_i$ is located;\\
    \noindent

    \textbf{If} $Random()=0$ \\
    \textbf{ then} Move toward an arbitrary position in $Cell_i$, different from $Current\_Pos$;\\
\caption{Probabilistic Scattering executed by robot $r_i$. \label{SP}}
\end{algorithm}

In the following we study the convergence time of Algorithm \ref{SP}.
\begin{lemma}
\label{lemma:scatter}
The convergence time of Algorithm~\ref{SP} is $O(n)$.
\end{lemma}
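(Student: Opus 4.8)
The plan is to mirror the Markov-chain methodology of Lemma~\ref{lemma:ff-ct}, but to exploit the fact that here the per-step progress probability stays bounded away from zero, which is exactly what removes the logarithmic factor. I would introduce the stochastic process $X_t=k$ meaning that after round $t$ the robots occupy exactly $k$ \emph{distinct} positions; the predicate ${\mathcal P}_{Scattering}$ holds precisely when $X_t=n$, so the goal is to bound the expected time to reach state $n$ from the worst case $X_0=1$ (all $n$ robots piled on one point), which requires at most $n-1$ upward transitions.

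Next I would establish the two structural facts that make the chain well behaved. First, $X_t$ is non-decreasing: since an active robot always moves to a point of its \emph{own} Voronoi cell, and distinct Voronoi cells are disjoint, a move can never place a robot onto an already-occupied point of another cell. A robot leaving a multiplicity point for an empty point of its cell raises the count by one, while any other move leaves it unchanged. The same disjointness argument yields Closure: once $X_t=n$ every cell is a singleton, so all subsequent moves keep the positions distinct. Second, whenever $X_t=k<n$ there is at least one multiplicity point, hence a robot that, once activated, leaves that point for the free interior of its cell with probability $\frac{3}{4}$; under a fair (or probabilistic) scheduler such a robot is activated within the round, so
$$\PP[X_{t+1}=k+1 \mid X_t=k]\ \geq\ \frac{3}{4}, \qquad 1\leq k<n.$$

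With these facts the computation parallels the gathering proof. Writing $T_l^n=\E[\min\{t:X_t=n \mid X_0=l\}]$ and letting $T_k^{k+1}$ be the expected time to gain one fresh occupied position, the constant lower bound on the transition probability gives $T_k^{k+1}\leq\frac{4}{3}$ for every $k$, \emph{independently of} $k$. Summing the at most $n-1$ increments needed to move from one occupied position to $n$ yields
$$T_1^n\ =\ \sum_{k=1}^{n-1}T_k^{k+1}\ \leq\ \frac{4}{3}\,(n-1)\ =\ O(n).$$
The contrast with gathering is the whole point: there the increment probability $\frac{n-l}{n}$ degrades to $\frac{1}{n}$ near the end, producing the harmonic sum $\alpha_n\sum 1/k\approx\alpha_n\ln\alpha_n$, whereas here it is pinned at the constant $\frac{3}{4}$.

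The main obstacle I expect is the geometric step rather than the probabilistic bookkeeping: one must argue that a move always lands in \emph{empty} space, and therefore strictly increases (or preserves) the number of occupied positions, including the delicate case of several robots that share a single point, and hence a single cell, being activated in the same round. This rests on the strict-inequality definition of Voronoi cells (boundary points belong to no cell) together with the target being an arbitrary interior point distinct from the current one, so that almost surely no two simultaneously moving robots collide. Once that is secured, the constant transition probability, and hence the $O(n)$ bound, follows immediately.
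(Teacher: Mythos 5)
Your proof is correct and follows essentially the same route as the paper: both model the number of distinct occupied positions (equivalently, Voronoi cells) as a monotone birth chain and sum the expected one-step hitting times, the only difference being that you lower-bound each per-round progress probability by the constant $\frac{3}{4}$ (giving $\frac{4}{3}(n-1)$) where the paper uses the sharper $1-(1/4)^{m}$ and obtains $T_1^n \sim n+\frac{4}{3}$. The destination-collision subtlety you flag at the end is likewise glossed over in the paper's own worst-case model, so your treatment is on par with it.
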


\begin{proof}
In order to study the convergence time of this algorithm we introduce the following stochastic process :
$X_t=k$ means that at time $t$ there are $k$ different Voronoi cells. 

Let us consider the worst scenario : If $X_t=k$, we assume that there
are $k-1$ robots at $k-1$ different positions and $n-k+1$ robots at the exact same position. 
Therefore, our stochastic process has the probability transitions given in Figure \ref{fig:scat}.

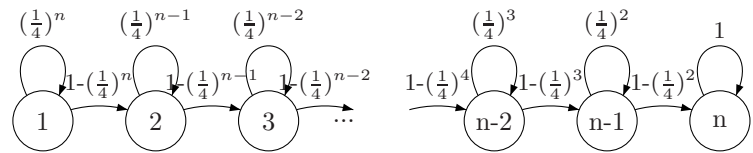
\begin{figure}[ht]
\hrule
\bigskip
\begin{picture}(130,25)
\node(A)(40,10){1}
\node(B)(55,10){2}
\node(C)(70,10){3}
\node(D)(100,10){n-2}
\node(E)(115,10){n-1}
\node(F)(130,10){n}
\node[Nframe=n](C')(85,10){}
\node[Nframe=n](D')(85,10){}
\node[Nframe=n](D'')(80,10){...}
\gasset{curvedepth=2}
\drawedge(A,B){\small1-$(\frac{1}{4})^n$}
\drawedge(B,C){\small1-$(\frac{1}{4})^{n-1}$}
\drawedge(C,C'){\small1-$(\frac{1}{4})^{n-2}$}
\drawedge(D',D){\small1-$(\frac{1}{4})^4$}
\drawedge(D,E){\small1-$(\frac{1}{4})^3$}
\drawedge(E,F){\small1-$(\frac{1}{4})^2$}
\drawloop[loopdiam=6,loopangle=90](A){\small$(\frac{1}{4})^n$}
\drawloop[loopdiam=6,loopangle=90](B){\small$(\frac{1}{4})^{n-1}$}
\drawloop[loopdiam=6,loopangle=90](C){\small$(\frac{1}{4})^{n-2}$}
\drawloop[loopdiam=6,loopangle=90](D){\small$(\frac{1}{4})^3$}
\drawloop[loopdiam=6,loopangle=90](E){\small$(\frac{1}{4})^2$}
\drawloop[loopdiam=6,loopangle=90](F){\small1}
\end{picture}
\hrule
\caption{The stochastic process $X_t$ associated to Algorithm \ref{SP}}
\label{fig:scat}
\end{figure}

For all $k \in \{1,\hdots,n \}$ let $T_l^k$ be the time for the stochastic process to enter in $k$, starting from $l$.
Based on the above notation and the chain proposed in Figure \ref{fig:scat}
we obtain the following induction formula~:
\begin{eqnarray*}
\forall k \in \NN,~ T_{l}^{k+1}=1+(1-\alpha^{n-(k-1)})T_l^k+\alpha^{n-(k-1)} T_{l}^{k+1}
\end{eqnarray*}
where $\alpha=\frac{1}{4}$.
This leads to $T_{l}^{k+1}-T_l^k=\frac{1}{1-\alpha^{n-(k-1)}}$.
If we sum from $0$ to $n-1$ we get :
\begin{eqnarray*}
T_1^n=\sum_{k=0}^{n-1}\frac{1}{1-\alpha^{n-(k-1)}}=\sum_{j=0}^{n-1}\frac{1}{1-\alpha^{j}}
\end{eqnarray*}
This sequence does not converge since $ \displaystyle  \lim_{j \rightarrow \infty}\frac{1}{1-\alpha^{j}} \neq 0$. Furthermore 
$\displaystyle  \frac{1}{1-\alpha^{j}} \mathop{\sim}_{j \rightarrow \infty} 1+\alpha^j$. Thus the partial sums are equivalent :
\begin{eqnarray*}
T_1^n \mathop{\sim}_{n \rightarrow \infty} \sum_{j=0}^{n-1}  1+\alpha^j
\end{eqnarray*}
Since $\sum_{j=0}^{n-1} (1+\alpha^j) = n + \sum_{j=0}^{n-1}  \alpha^j $
we finally obtain : $  T_1^n \displaystyle  \mathop{\sim}_{n \rightarrow \infty} n + \frac{4}{3}$
\end{proof}

\subsection{Scattering in fault-prone environments}
In this section we analyze the correctness and the complexity of
Algorithm \ref{SP} in crash-prone environments\footnote{Note that
\cite{code} does not address this issue}. Note that 
Algorithm \ref{SP} is not byzantine resilient.
In the sequel we denote by $(n,f)$ systems with $n$ robots 
where $f$ is the maximal number of
crashed robots. Note that scattering is impossible in systems where nodes may
crash since faulty nodes may share the same position. Therefore, we
consider in the following a weaker version of scattering, {\it weak
scattering} that requires the verification of the scattering
specification only from correct nodes.

\begin{lemma}
\label{lemma:scatft}
Algorithm \ref{SP} eventually verifies the $(n,f)$ weak scattering
specification under a weakly fair scheduler.
\end{lemma}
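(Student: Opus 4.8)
The plan is to exploit the defining property of Voronoi diagrams: the (open) Voronoi cell of an occupied position contains no other occupied position. First I would establish the \emph{safety} of a single move. If an active robot $r_i$ sits at position $p$ with cell $Cell_i$, then $Random()=0$ sends it to a point $p' \in Cell_i$ with $p' \neq p$. Since every other occupied position $q$ satisfies $q \notin Cell_i$ (it is strictly closer to its own site than to $p$), the destination $p'$ coincides with no position occupied before the round, and, because the interiors of distinct cells are disjoint, the only cell whose interior contains $p'$ is $Cell_i$ itself. Consequently, two robots that start the round at distinct positions can never collide during that round; the only collisions that can be created are among robots that were already \emph{co-located} and happen to pick the same destination. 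In particular, a correct robot that is alone at its position before the round cannot be joined by anyone (no robot from another cell can reach it, and a crashed robot never moves), so once a correct robot is isolated it stays isolated. This already gives closure.

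For convergence I would track the potential $\Phi_t =$ number of correct robots that are alone at their position at round $t$ (sharing it with no robot, correct or crashed). By the safety argument above $\Phi_t$ is non-decreasing along every execution, and $\Phi_t = n-f$ is exactly the weak-scattering predicate restricted to correct robots; thus it suffices to show that, as long as $\Phi_t < n-f$, the potential eventually increases with probability one. Whenever $\Phi_t < n-f$ there is a position $p$ carrying at least one bad correct robot; by safety no external robot can enter this group, so its set of correct members is frozen until the scheduler activates one of them, which weak fairness guarantees happens after finitely many rounds.

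The core of the argument, and the main obstacle, is that the scheduler may activate several co-located correct robots in the same round and the destinations are chosen adversarially, so the adversary could co-locate all the movers at one fresh point and thereby split one bad group into two bad groups without creating any isolated robot (for a group of $c \geq 4$ correct robots even a proper, mixed move can fail to isolate anyone). I would defeat this with randomization: at the first round in which $s\ge 1$ of the group's correct robots are activated, each flips $Random()$ independently, and with probability $s\cdot\frac34\cdot\left(\frac14\right)^{s-1}$ \emph{exactly one} of them moves; that lone mover lands at a fresh point and is necessarily isolated, so $\Phi$ strictly increases regardless of the adversary's destination choices. Since $1\le s\le n-f$, this success probability is bounded below by a constant $\rho=\rho(n)>0$, and each such activation is an independent trial with success probability at least $\rho$. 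As long as $\Phi<n-f$, weak fairness supplies infinitely many such trials, so $\PP[\text{no trial ever succeeds}]\le\lim_{m\to\infty}(1-\rho)^m=0$; hence $\Phi$ increases almost surely. Because $\Phi$ is non-decreasing and capped at $n-f$, after at most $n-f$ increases the configuration satisfies the $(n,f)$ weak-scattering specification and, by closure, keeps satisfying it.
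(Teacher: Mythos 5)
Your argument is correct and follows essentially the same route as the paper's own proof: the paper tracks the set $\mathcal{M}$ of correct robots sharing a position (the complement of your potential $\Phi$), observes that it cannot grow, and uses the same key event---exactly one of the $m$ activated co-located robots moves, with probability at least $\frac{3}{4}\left(\frac{1}{4}\right)^{m-1}$---together with weak fairness to drive $\mathcal{M}$ to the empty set. You merely make explicit two points the paper leaves implicit (the Voronoi safety argument showing moves cannot create new collisions across distinct cells, and the Borel--Cantelli step upgrading ``positive probability'' to almost-sure convergence), so no further comment is needed.
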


\begin{proof}
In the case when the $f$ faulty nodes disappear from the network the
convergence proof is similar to the convergence of the system when the number of robots
is $n-f$ (see \cite{code}).

Let consider systems where the faulty robots are still present in the system.
Let ${\mathcal M}$ be the set of correct robots whom position is occupied by
another robot (faulty or correct). In the following we show that
starting in a illegitimate configuration (${\mathcal M} \neq \emptyset$) the system converges
with positive probability in a finite number of steps 
to a configuration where ${\mathcal M}= \emptyset$. Let call the
latter configuration legitimate.
 
Consider an execution, $e$, starting in a illegitimate configuration, $c$
(${\mathcal M} \neq \emptyset$ in $c$). Assume the scheduler does not
choose robots in ${\mathcal M}$. Then, it may choose either faulty
robots or correct robots not in ${\mathcal M}$ (i.e. these robots 
do not share their position with any
other robot). In both cases, the size of ${\mathcal M}$ does not increase.
Since the scheduler is weakly fair it will eventually choose at least
one robot in ${\mathcal M}$.  Let $m \geq 1$ be the number of robots chosen by
the scheduler. Either all the $m$ robots choose randomly to stay or
leave the current position or some of them crash. In both cases the
size of ${\mathcal M}$ decreases by at least one robot.
With positive probability, $p \geq \frac{3}{4}(\frac{1}{4})^{m-1}$ this robot will
change its position to a new position in its Voronoi cell and hence
the size of ${\mathcal M}$ decreases by at least one. Recursively
repeating the same argument, in a finite number of steps, the size of
${\mathcal M}$ drops to 0 with positive probability.
\end{proof}

\begin{lemma}
\label{lemma:scatterft}
The convergence time of Algorithm \ref{SP} in systems with $n$ correct
robots but $f$ is $O(n)$.
\end{lemma}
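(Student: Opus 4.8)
The plan is to reuse, almost verbatim, the stochastic process and the chain of Lemma~\ref{lemma:scatter}, and to show that the presence of the $f$ crashed robots can only help the convergence and never slow it down, so that the hitting-time computation of the fault-free case carries over unchanged. I set $X_t=k$ to mean that at time $t$ exactly $k$ of the correct robots already occupy pairwise distinct positions shared with no other robot, so that the legitimate configuration ${\mathcal M}=\emptyset$ established in Lemma~\ref{lemma:scatft} corresponds to $X_t=n$. As in the fault-free proof I analyze the worst scenario in which, whenever $X_t=k$, there are $k-1$ already-scattered correct robots and a single cluster formed by the remaining $n-k+1$ correct robots sharing one position, possibly together with some crashed robots.

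The key observation is that the crashed robots are passive: they never move, hence they never join a cluster nor create a new shared position during the execution, and they contribute only a fixed set of already-distinct cells. Consequently the only event that changes the state is the random decision of the $n-k+1$ clustered \emph{correct} robots, each of which leaves its position with probability $\frac{3}{4}$ (equivalently, stays with probability $\frac{1}{4}$) exactly as prescribed by Algorithm~\ref{SP}. The chain therefore remains in state $k$ precisely when all $n-k+1$ clustered correct robots decide to stay, which happens with probability $\alpha^{\,n-k+1}$ where $\alpha=\frac{1}{4}$, and otherwise at least one of them moves to a free point of its Voronoi cell and the number of isolated correct robots increases by at least one. These are exactly the transition probabilities of Figure~\ref{fig:scat}, the crashed robots playing no role in the clustered site.

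Hence the induction of Lemma~\ref{lemma:scatter} applies with no modification: writing $T_l^k$ for the expected time to hit state $k$ from state $l$, the same one-step argument gives $T_l^{k+1}-T_l^{k}=\frac{1}{1-\alpha^{\,n-(k-1)}}$ and therefore $T_1^n=\sum_{j=0}^{n-1}\frac{1}{1-\alpha^{j}}\sim n+\frac{4}{3}=O(n)$. I would also stress a qualitative contrast with the crash-tolerant gathering lemma: there a crash drives the chain one step backward and adds a $2f$ term, whereas here a crash merely freezes a robot and can never un-isolate an already scattered robot, so no backward term appears. In fact a correct robot that crashes is simply removed from the set of robots that still have to become isolated, which can only shrink the worst-case cluster; counting the $f$ faulty robots among the population then refines the estimate to $O(n-f)$, as announced.

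The step I expect to be most delicate is justifying that a crashed robot co-located with correct robots does not stall progress: I must argue that once a correct robot moves to an interior point of its Voronoi cell it becomes \textbf{strictly} isolated, so the quantity ${\mathcal M}$ of Lemma~\ref{lemma:scatft} still strictly decreases even though its crashed companion stays put. This is exactly the degenerate co-location situation already treated in the proof of Lemma~\ref{lemma:scatft}, where the probability that at least one of $m$ co-located robots leaves is bounded below by $\frac{3}{4}(\frac{1}{4})^{m-1}$; invoking that argument lets me conclude that the worst-case crash chain is stochastically dominated by the fault-free chain of Figure~\ref{fig:scat}, after which the hitting-time computation above is identical and the $O(n)$ bound follows.
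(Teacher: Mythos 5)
Your proposal is correct and follows essentially the same route as the paper: both reuse the fault-free chain of Lemma~\ref{lemma:scatter} on the correct robots only (you count isolated correct robots upward to $n$, the paper counts the still-clustered set ${\mathcal M}$ downward to $0$, which is equivalent) and both rest on the observation that a crash can only decrease the remaining work, yielding the bound $n-f+\frac{4}{3}=O(n)$. Your treatment of the co-located crashed robot via the $\frac{3}{4}(\frac{1}{4})^{m-1}$ bound from Lemma~\ref{lemma:scatft} is in fact slightly more explicit than the paper's own one-line justification.
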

\begin{proof}
The idea of the proof is similar to the one presented in Section \ref{sec:scattering}.
Consider a random variable $Y_t$ with values in $\{0$..$n\}$. 
$Y_t=k$ iff there is a set ${\cal M}$ of $k$ {\bf non faulty} robots
in which each robot shares its position with another robot. 
Our goal is to compute the time before
$Y_k$ reaches $0$. Note that the presence of faulty robots can only accelerate the
process (see the proof of Lemma \ref{lemma:scatft}). 
That is, if a robot of the set ${\cal M}$ crashes, 
our random variable is decreased by $1$.
Hence the convergence time is upper bounded by $n-f+\frac{4}{3}$. 
\end{proof}

\section{How to gather oblivious robots?}
\label{sec:gathering_via_scattering}
In Section \ref{sec:gathering} we analyzed a possible strategy to gather robots that converges in $O(nln(n))$ 
in fault-free environments and in $O(nln(n)+2f)$ in crash-proned systems. 
In Section \ref{sec:scattering} we computed the complexity of a scattering strategy 
that is one of optimal (i.e. $O(n)$ in fault-free environments and $O(n-f)$ in fault-prone environments). 
Interestingly, even if both strategies solve an agreement problem there 
is an important complexity gap between the existing implementations of gathering and scattering. 
An interesting open question is how to reduce this gap. That is, how to implement gathering in order 
to match the $O(n)$ lower bound?

In this section we discuss a promising alternative to obtain
gathering. The gathering algorithm proposed in \cite{code} 
is built on top of the  
probabilistic scattering algorithm shown in Figure \ref{SP} and 
works as follows: if there exist at least two positions 
with strict multiplicity then apply the scattering 
procedure otherwise apply any deterministic 
gathering protocol based on multiplicity knowledge. Note that 
the above protocol does not verify the gathering 
specification for the case when
the initial configuration does not contain strict multiplicity points.
The argument is similar to the one used in order to prove probabilistic
gathering impossible under an arbitrary scheduler (see \cite{DGMP-disc06}). 
That is, started in a configuration legitimate for  
scattering the above algorithm will try to apply 
the deterministic gathering. Since the gathering 
algorithm is deterministic several points of strict 
multiplicity may be created unless the scheduler 
is restricted to weaker forms (e.g. centralized or bounded). 
Then, the scattering restarts 
but the scheduler may ``derandomize'' the choice 
of the robots such that all the multiplicity points
are simultaneously destroyed.
From this point onward one can exhibit an infinite execution in which 
the system cycles between scattering 
and gathering without achieving convergence. However, the proposed 
method becomes interesting in systems where the flip/flop
scattering/gathering would be always able to converge to a configuration with an unique 
strict multiplicity point. Once this point created the system will need 
a single round to converge.  So, far no algorithm responds to these criteria.

\remove{ 
\begin{note}
Note that the lower bound to achieve gathering 
in systems with multiplicity knowledge and one 
single point of strict multiplicity is $n-2$. That is, 
assume an initial configuration with $2$ robots 
sharing the same position and all the other robots 
sharing different positions in the plane. In order 
to ensure the progress the scheduler will 
choose at least one robot at each configuration. 
Since the number of robots not gathered is initially 
$n-2$ then in at most $n-2$ 
steps the systems converges to a legitimate configuration.
\end{note}
}

\remove{
The following lemmas provide the complexity of constructive 
gathering in fault-free environments.

\begin{lemma}
The complexity of
constructive gathering in fault-free environments 
and systems with multiplicity knowledge started in configurations 
with a single is O(n).
\end{lemma}

\begin{proof}
In order to compute the costs of
constructive gathering we need to addition the cost 
of probabilistic scattering with the cost of 
deterministic gathering. Lemma \ref{lemma:scatter} shows 
that the scattering costs is $O(n)$ while 
Note \ref{note:detgathering} shows that gathering 
with multiplicity knowledge costs $O(n)$. Overall, the constructive gathering costs $O(n)$.
\end{proof}

Note that we do not consider the complexity of constructive 
gathering in crash-prone or byzantine-prone environments since 
in these environments scattering via this method is impossible even 
if the specification of the problem is relaxed to the weak gathering 
conform to the above discussion.
}

\section{Conclusions and Discussions}
\label{sec:conclusion}
The contribution of this paper is twofold. First, we proposed a
detailed complexity analysis of the existent probabilistic 
agreement algorithms (gathering and scattering) in both fault-free and
fault-prone environments. Moreover, using Markov chains tools and 
multiplicity knowledge we proved that the convergence time of 
gathering can be reduced from $O(n^2)$ (the best known tight 
bound) to $O(nln(n))$. Second, we prove that the best known scattering
bound is $O(n)$ (which is one to optimal). Additionally, we proved that in
crash-prone environments gathering is achieved in $O(nln(n)+2f)$
rounds while scattering needs $O(n+f)$ rounds.
Finally, we proposed a discussion related to the best strategy to design 
gathering. This work opens several research directions. 
For example the flip/flop scattering/gathering 
seems to be a promising direction to reduce the complexity of gathering. 
Another interesting direction is the complexity of byzantine-tolerant gathering. 
We conjecture that byzantine-tolerant gathering can be achieved in polynomial time. 
\bibliographystyle{plain}
\bibliography{gathering}
\end{document}